\newtheorem{theorem}{Theorem}
\newcommand{\remove}[1]{}
\newcommand{\ignore}[1]{}
\newcommand{\concept}[1]{\textbf{#1}}
\newcommand{\newData}[2]{\newcommand{#1}{\DataSty{#2}\xspace}}
\newcommand{\newFunc}[2]{\newcommand{#1}{\FuncSty{#2}\xspace}}
\newData{\MyId}{myId}
\newFunc{\AdoptCommit}{AdoptCommit}
\newData{\Adopt}{adopt}
\newData{\Commit}{commit}
\newData{\maxRound}{maxRound}
\newData{\Win}{win}
\newFunc{\TAS}{TAS}
\title{A one-bit swap object using test-and-sets\\ and a max register}
\author{
James Aspnes\thanks{Yale University, Department of Computer Science.  Supported in part by NSF grant CCF-0916389.
\texttt{aspnes@cs.yale.edu}.}
}
\newFunc{\Swap}{swap}
\begin{document}

\maketitle

\begin{abstract}
We describe a linearizable, wait-free implementation of a one-bit
\Swap object from a single max register and an unbounded array of
test-and-set bits.  Each \Swap operation takes at most three
steps.  Using standard randomized constructions, the max register and
test-and-set bits can be replaced by read-write registers, at the
price of raising the cost of a \Swap operation
to an expected $O(\max(\log n, \min(\log t, n)))$ steps, where $t$
is the number of times the \Swap object has previously
changed its value and $n$ is the number of processes.
\end{abstract}

\section{Introduction}
\label{section-introduction}

A \Swap object supports a single read-modify-write operation \Swap
that returns the old contents of the object while setting a new value.
The simplest variant of a \Swap object is one that stores only a
single bit.  This variant is equivalent to a test-and-set object that
has been extended with a test-and-reset operation, where each
operation returns the old value of the object and writes a new value
($1$ for test-and-set and $0$ for test-and-reset), all as an atomic
operation.

General implementations of \Swap objects can be very expensive, even
given test-and-set bits.  The best known general \Swap object
implementation is that
of Afek, Weisberger, and Weisman~\cite{AfekWW1992}, which may require
as many as $\Theta(n \log n)$ steps to carry out a single \Swap
operation even in 
the one-shot case.  Whether this cost can be reduced is an interesting
open question.

We do not answer this question, but instead observe that the cost can
be greatly reduced if the size of the \Swap object is restricted to a
single bit.  We give a simple implementation of a \Swap object from a
single max register~\cite{AspnesAC2012} that indexes an unbounded array of test-and-set
bits.  The key observation is that \Swap operations on a one-bit
register can be linearized by first by separating out groups of
\Swap operations that all have the same input $0$ or $1$
(using the max register), and then
choosing a single operation from each group to linearize first (using
a test-and-set).  Because the \Swap object is limited to one bit,
knowing whether an operation is linearized first within its group is
enough to determine its return value: it will be equal to the common
input of the group if it is \emph{not} linearized first and equal to
the other input if it is.  No further ordering of operations within a
group is needed.

It is known~\cite{AspnesAC2012} that unbounded max registers can be
implemented directly from read-write registers, at a cost of
$O(\min(\log v, n))$ steps for any operation that leaves a max register
with value $v$.  Test-and-set bits can also be implemented from
read-write registers if randomization is
permitted; the costs of the best current implementations are 
an expected
$O(\log n)$ register operations for each test-and-set operation 
assuming an adaptive
adversary
that can react to what the implementation does~\cite{AfekGTV1992} and
$O(\log^* n)$ expected operations assuming an
oblivious adversary that cannot~\cite{GiakkoupisW2012}.
Applying these construction to our
algorithm gives a cost of either 
$O(\max(\log n, \min(\log t, n)))$ 
or
$O(\max(\log^* n, \min(\log t, n)))$ 
register operations on average
for each \Swap operation, where $t$ is the number of times the \Swap
object switches between its two values in the linearized schedule.
For typical values of $t$, we would expect the $O(\log t)$ term to
dominate.

\section{Model}

We assume a standard asynchronous shared-memory model, with
concurrency modeled by interleaving under the control of an
\concept{adversary scheduler}.  We are interested in
implementations of objects that are
\concept{wait-free} (every process
finishes in a finite number of steps in any execution) and
\concept{linearizable}~\cite{HerlihyW1990} (there exists a sequential
execution of the object that is consistent with the observed execution
order).  

Our base objects consist of a max register and an array of
test-and-set bits.  A \concept{max register}~\cite{AspnesAC2012}
supports write 
and read operations, where a read operation returns the largest
value previously written.  A \concept{test-and-set} bit supports a
single operation $\TAS$, which sets the bit to $1$ and returns the
previous value.  Unless otherwise specified, 
we assume that both the max register and the
test-and-set bits are initialized to $0$.  As discussed previously,
we can 
also use standard techniques to replace these base objects with
ordinary registers.

\section{Implementation}

Pseudocode for the \Swap operation is given in Algorithm~\ref{alg-swap}.
The implementation uses a single max register $\maxRound$, and an
unbounded array of test-and-set bits $t[0\dots]$.  To initialize the
\Swap object to $b$, set $\maxRound$ to $b$ and initialize $t[b]$ to
$1$ (as if a \TAS operation had already successfully been performed on
it); this is equivalent to running $\Swap(b)$ with $\maxRound$ and all
test-and-set objects initialized to $0$
and discarding the result.

\begin{algorithm}
    \Procedure{$\Swap(v)$}{
        $r \leftarrow \maxRound$ \\
        \If{$r \not\equiv v \pmod 2$}{
            \label{line-parity-check}
            $r \leftarrow r+1$ \\
            $\maxRound \leftarrow r$
        }
        \eIf{$\TAS(t[r]) = 0$}{
            \label{line-tas}
            \Return $\neg v$
        }{
            \Return $v$
        }
    }
    \caption{Pseudocode for a \Swap operation}
    \label{alg-swap}
\end{algorithm}

The step complexity of this implementation is
$O(1)$.  Indeed, each execution of $\Swap$ requires either two or three
operations on the base objects depending on the outcome of the test in
Line~\ref{line-parity-check}.  

Both max registers and test-and-set bits can be implemented from
registers.
If the max register $r$ is implemented
from registers using the technique of~\cite{AspnesAC2012}, the cost 
becomes
$O(\log v, n)$, where $v$ is the value in the max register.  It is easy to
see that $v$ is bounded by the number of $\Swap$ operations, since
each $\Swap$ operation increments it at most once.  
Test-and-set bits can also be implemented directly from registers
using randomization.  Using the best currently-known implementations,
the cost is an expected $O(\log n)$ steps per
test-and-set operation~\cite{AfekGTV1992} assuming an adaptive
adversary and 
$O(\log^* n)$~\cite{GiakkoupisW2012} assuming an oblivious adversary.  In either case the cost of the test-and-set will be
dominated by the cost of the max register after a linear number of
$\Swap$ operations in the worst case.

\section{Linearizability}
\label{section-linearizability}

To show linearizability, we construct an explicit linearization order
based on the final value of $r$ for each \Swap operation, with
processes sharing the same value ordered further by the linearization order of
the test-and-set bit $t[r]$.

\begin{theorem}
Algorithm~\ref{alg-swap} is a linearizable implementation of a \Swap
object.
\end{theorem}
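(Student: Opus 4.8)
I would construct an explicit linearization point for each Swap operation and show it produces a valid sequential history. Here's the approach:

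First, I'd establish the key structural facts about the max register value $r$ assigned to each operation. Because $\maxRound$ is a max register and the algorithm only ever writes $r+1$ (an increment), and because of the parity check on Line~\ref{line-parity-check}, every operation $\Swap(v)$ ends up with a final value of $r$ satisfying $r \equiv v \pmod 2$. I'd want to argue that the set of $r$-values actually used forms a contiguous range $0, 1, 2, \dots, r_{\max}$ (or starting from $b$ for initial value $b$), and that within any fixed value of $r$, at least one operation performs $\TAS(t[r])$ — indeed all operations that settle on that $r$ do so. A subtle point: an operation $\Swap(v)$ reads $r_0 = \maxRound$, and if $r_0 \equiv v$ it keeps $r_0$; otherwise it bumps to $r_0+1$. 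I need that the operation's final $r$ is never "skipped over" — i.e., if some operation uses value $k$, then some operation also uses value $k-1$. This follows because a value $k \ge 1$ only gets written to $\maxRound$ by an operation that read $k-1$ and incremented, so that operation used $k-1$... wait, no — it wrote $k$ but *uses* $k$ itself. Let me reconsider: the operation that first raises $\maxRound$ to $k$ read some value $r_0 < k$ with $r_0 \not\equiv$ its input; but $r_0 = k-1$ necessarily (since increments are by $1$ and $\maxRound$ is monotone, and it read it *after* it reached $k-1$... actually it could have read an older value). This monotonicity/contiguity argument is the first thing I'd nail down carefully.

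Next, I'd define the linearization order: operations are ordered primarily by their final $r$-value (smaller first), and within the same $r$-value, by the linearization order of the underlying test-and-set object $t[r]$ (which is linearizable, so such an order exists and is consistent with real-time order among those operations). I then need to check two things. (1) \textbf{Real-time order is respected:} if operation $A$ completes before operation $B$ begins, then $A$ precedes $B$ in the linearization. Since $A$ completed, it wrote its final value to $\maxRound$ before finishing (or read it), so $B$ reads a value $\ge r_A$; combined with the parity adjustment, $r_B \ge r_A$. If $r_B > r_A$ we're done; if $r_B = r_A$, then both did $\TAS(t[r])$, and since the test-and-set is linearizable and $A$'s TAS completed before $B$'s began, $A$ precedes $B$ there too. (2) \textbf{Return values are correct:} in the linearized sequential history, each $\Swap(v)$ must return the value the object held just before it. Here I use the one-bit structure: I claim the object's value as seen in the linearized order is exactly $r \bmod 2$ "just after the group with value $r$ starts." Within a group of operations all sharing value $r$ (hence all with input $v \equiv r$), the *first* one in test-and-set order is the one whose $\TAS$ returns $0$; it returns $\neg v$. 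Every later operation in the group returns $v$. I'd verify: before the group with value $r$, the object's value is $(r-1) \bmod 2 = \neg v$ (by the value held at the end of group $r-1$, inductively — the last operation of group $r-1$ has input $\neg v$ and, being not-first, returns... hmm, need the last operation of a group to leave value = that group's common input). So the first operation of group $r$ correctly reads $\neg v$ and writes $v$; each subsequent operation in the group reads $v$ (the value the previous same-group operation wrote) and writes $v$ again. That matches the pseudocode returns exactly.

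\textbf{Main obstacle.} The delicate part is the interplay between the max-register read on Line~2 and the possible increment: an operation may read a *stale* value of $\maxRound$, so I cannot naively say "operation $B$ sees everything $A$ did." I need the monotonicity of the max register plus the parity trick to force contiguity of the $r$-values and to force that a completed operation's write of $r$ is visible to any later-starting operation. Closely related is the induction showing each group is nonempty and that its first element (in TAS order) sees return value $\neg v$: this requires that the max register value really does pass through every intermediate value, which in turn relies on the fact that the only write performed is a $+1$ increment of a just-read value. I'd also need to handle the initialization convention (initial value $b$: $\maxRound = b$, $t[b]$ pre-set to $1$) as the base case of the induction, treating it as a phantom completed $\Swap(b)$. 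Once contiguity and visibility are in hand, the rest is a routine structural induction on $r$.
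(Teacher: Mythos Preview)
Your approach is essentially identical to the paper's: group operations by their final $r$-value into sets $S_i$, linearize by increasing $i$ and within each $S_i$ by the test-and-set order on $t[i]$, then verify real-time consistency and return-value correctness using a no-gaps (contiguity) argument plus the initialization convention as the base case. Your hesitation about stale reads in the contiguity step is unnecessary---since the only write to $\maxRound$ is $r{+}1$ immediately after reading $r$, the first writer of $k$ necessarily read exactly $k-1$, and for $k-1>b$ that value must itself have been written by some operation, which then lies in $S_{k-1}$.
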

\begin{proof}
Fix an execution of the protocol.

For each \Swap operation $\sigma$, define $r(\sigma)$ to be the value
of the internal variable $r$ at the time of the call to
$\TAS(t[r])$ in Line~\ref{line-tas} of the execution of $\sigma$.
Note that $r(\sigma) \bmod 2$ is always equal to the input value
$v_\sigma$
of $\sigma$.
Let $S_i$ be the
set of all \Swap operations $\sigma$ for which $r(\sigma) = i$.  We
will construct a linearized execution by ordering the sets $S_i$
by increasing $i$, and ordering operations within each $S_i$ based on
the linearization order for $t[i]$.

To show that this is in fact a linearization, we must show both that
it respects the observable order of operations and that the resulting
execution corresponds to a sequential execution of a \Swap object.

For the first part, suppose that some operation $\sigma_1$ finishes
before another operation $\sigma_2$ starts.  
First let us show that $r(\sigma_1) \le r(\sigma_2)$.
The value $r(\sigma_1)$ is either read from $\maxRound$ or written to
it before $\sigma_1$ finishes; the subsequent read of $\maxRound$ by
$\sigma_2$ thus returns a value $r' \ge r(\sigma_1)$, and 
$r(\sigma_2)$ is either $r'$ or $r'+1$, which in either case is
greater than or equal to $r(\sigma_1)$.
If $r(\sigma_1) < r(\sigma_2)$, then the two operations are in
distinct sets $S_{r(\sigma_1)}$ and $S_{r(\sigma_2)}$, and $\sigma_1$
is linearized first.
If instead $r(\sigma_1) = r(\sigma_2)$, then both are in the same set
$S_i$.
Now because 
$\sigma_1$ accesses
$t[i]$ before $\sigma_2$, it again holds that $\sigma_1$ is linearized
first.  

For the second part, we start by showing that there are no gaps in the
sequence of sets $S_i$.  
Specifically, we observe that if $S_i$ is nonempty
for $i > b+1$, where $b$ is the initial value of $\maxRound$,
then so is $S_{i-1}$.  The reason is that if $S_i$ is nonempty, then
either some operation reads $i$ from $\maxRound$ or writes $i$ to
$\maxRound$.  In either case, because $i$ is not the initial value of
$\maxRound$, there is a first operation $\sigma$ that
writes $i$ to $\maxRound$.  This operation must previously have read
$i-1$ from $\maxRound$.  Since $i > b+1$, $i-1 > b$, and 
so $i-1$ can only appear in $\maxRound$ if some other operation
$\sigma'$ writes it.  But then $\sigma' \in S_{i-1}$ and
$S_{i-1}$ is nonempty as claimed.

Now consider some specific operation $\sigma$ and let $i = r(\sigma)$.
Recall that $i \bmod 2 = v_\sigma$, where $v_\sigma$ is the input to $\sigma$.
There are two cases, depending on the value returned by $\TAS(t[r])$
in $\sigma$:
\begin{itemize}
\item If this value is $0$, then we have that (a) $\sigma$
is linearized first among all operation in $S_i$, and (b)
$\sigma$ returns $\neg v_\sigma = (i-1) \bmod 2$.  If $S_{i-1}$ is nonempty,
then there exists a $\Swap(\neg v_\sigma)$ operation in $S_{i-1}$ that
linearizes immediately before $\sigma$, and thus it is correct for
$\sigma$ to return $\neg v_\sigma$.  If $S_{i-1}$ is empty, then $i-1 \le b$.
It cannot be the case that $i = b$, because $t[b]$ is initialized to
$1$, contradicting the assumption that $\TAS(t[i])$ returns $0$.  
Nor can we have $i < b$.  It follows that $i-1 = b$, and $\sigma$
correctly returns the initial value $b$.
\item If this value is $1$, then either (a) $\sigma$ is not linearized as
    the first operation in $S_i$, or (b) $\sigma$ is linearized as the
    first operation in $S_i$ and $i=b$.  In the first case, $\sigma$
    returns the input to the previous operation in $S_i$; in the
    second, it returns the initial value $b$.  In both cases this
    return value is correct.
\end{itemize}
\end{proof}

\section{Conclusion}

We've shown that it is possible to build a very efficient \Swap object
from test-and-set bits and max registers, if the \Swap object is
limited to two values.  The key idea is that we can alternate
sequences of $\Swap(0)$ and $\Swap(1)$ operations
so that the operations within each
sequence can be linearized with a single test-and-set bit.  Because
there are only two possible values, the return value of each \Swap
operation can be computed directly from the result of the test-and-set
operation: either it is linearized after another \Swap with the same
input, or it is linearized after another \Swap with a different input.
Unfortunately, there does not seem to be any direct way to expand this
trick to handle more than two inputs.

From the work of Afek, Weisberger, and Weisman~\cite{AfekWW1992}, we
know that a general \Swap object can be implemented directly from
test-and-set bits and read-write registers, but the cost per swap
operation is superlinear in the number of processes.  This leaves a huge complexity
gap between the two-valued case and the general case.  A natural next
step might be to look at less restricted cases such as three-valued \Swap.
This object is general enough to break the specific technique used
here for two-valued swap, but may still allow for a highly efficient
implementation.

\section{Acknowledgments}

The question of how to build small swap objects was inspired by
discussions of a related problem with Dan Alistarh.  I would like to
thank Dan Alistarh, Faith Ellen, and Keren Censor-Hillel for comments
on the algorithm and discussions of possible extensions.

\bibliographystyle{alpha}
\bibliography{paper}

\begin{thebibliography}{AGTV92}

\bibitem[AACH12]{AspnesAC2012}
James Aspnes, Hagit Attiya, and Keren Censor-Hillel.
\newblock Polylogarithmic concurrent data structures from monotone circuits.
\newblock {\em Journal of the ACM}, 59(1):2:1--2:24, February 2012.

\bibitem[AGTV92]{AfekGTV1992}
Yehuda Afek, Eli Gafni, John Tromp, and Paul M.~B. Vit\'{a}nyi.
\newblock Wait-free test-and-set (extended abstract).
\newblock In {\em Proceedings of the 6th International Workshop on Distributed
  Algorithms}, WDAG '92, pages 85--94, London, UK, UK, 1992. Springer-Verlag.

\bibitem[AWW93]{AfekWW1992}
Yehuda Afek, Eytan Weisberger, and Hanan Weisman.
\newblock A completeness theorem for a class of synchronization objects
  (extended abstract).
\newblock In {\em Proceedings of the Twelfth Annual ACM Symposium on Principles
  of Distributed Computing}, pages 159--170, 1993.

\bibitem[GW12]{GiakkoupisW2012}
George Giakkoupis and Philipp Woelfel.
\newblock On the time and space complexity of randomized test-and-set.
\newblock In {\em Proceedings of the 2012 ACM symposium on Principles of
  Distributed Computing}, PODC '12, pages 19--28, New York, NY, USA, 2012. ACM.

\bibitem[HW90]{HerlihyW1990}
Maurice~P. Herlihy and Jeannette~M. Wing.
\newblock Linearizability: a correctness condition for concurrent objects.
\newblock {\em ACM Transactions on Programming Languages and Systems},
  12(3):463--492, July 1990.

\end{thebibliography}

\end{document}